\newtheorem{proposition}{Proposition}[section]
\newtheorem{lemma}{Lemma}[section]
\newtheorem{remark}{Remark}[section]
\title{\LARGE \bf
Data--driven feedforward control design for nonlinear systems: \\
A control--oriented system identification approach*
}
\author{Max Bolderman$^{\dagger, 1}$, Mircea Lazar$^1$, and Hans Butler$^{1,2}$
\thanks{*This work is supported by the NWO research project PGN Mechatronics, project number 17973.}
\thanks{$^{\dagger}$Corresponding author: {\tt\small m.bolderman@tue.nl}.}
\thanks{$^{1}$Control Systems Group, Eindhoven University of Technology, Groene Loper 19, Eindhoven, 5612 AP, The Netherlands.}%
\thanks{$^{2}$ASML, De Run 6501, Veldhoven, 5504 DR, The Netherlands.}%
}
\begin{document}

\maketitle
\thispagestyle{empty}
\pagestyle{empty}

\begin{abstract}
Feedforward controllers typically rely on accurately identified inverse models of the system dynamics to achieve high reference tracking performance.
However, the impact of the (inverse) model identification error on the resulting tracking error is only analyzed a posteriori in experiments.
Therefore, in this work, we develop an approach to feedforward control design that aims at minimizing the tracking error a priori. 
To achieve this, we present a model of the system in a lifted space of trajectories, based on which we derive an upperbound on the reference tracking performance.
Minimization of this bound yields a feedforward control--oriented system identification cost function, and a finite--horizon optimization to compute the feedforward control signal.
The nonlinear feedforward control design method is validated using physics--guided neural networks on a nonlinear, nonminimum phase mechatronic example, where it outperforms linear ILC.
\end{abstract}

\section{INTRODUCTION}
\label{sec:Introduction}
Feedforward control is a dominant actor in achieving high reference tracking performance, and typically relies on linear, physics--based models~\cite{Boerlage2003, Devasia2002}.
Linear models have good extrapolation properties, but limited accuracy. 
As such, it would be desirable to employ rich, nonlinear models for feedforward control that can learn the complete system dynamics from data~\cite{Bolderman2023}. 

A common approach to feedforward control design is inverse model--based feedforward, which generates the feedforward signal by passing the reference through a model of the inverse system dynamics, see, e.g.,~\cite{Boerlage2003, Butterworth2012}. 
When the model of the system is nonminimum phase, i.e., it has an unstable inverse, different methods are available to generate a stable feedforward controller, see, e.g.,~\cite{Butterworth2012, Zundert2018}.
These methods however, are not directly extendable to nonlinear feedforward controllers.
Hence, a different approach is to formulate feedforward control as an optimization problem, where the goal is to minimize the norm of the difference between the reference and the model output. 
Within this category, it is possible to optimize the complete feedforward signal~\cite{Volckaert2009, Carrasco2011}, or to parameterize the feedforward signal as a function of time or the reference and optimize over the parameters~\cite{Ramani2017, Kasemsinsup2017}. 
When the system performs a repetitive task, an iterative learning control (ILC) method can be used to minimize the tracking error based on the tracking error of previous repetitions by updating the feedforward input~\cite{Bristow2006}, the parameters of an inverse model~\cite{Blanken2017}, or both~\cite{Saltik2022}. 

The aforementioned methods typically assume a known, physics--based model of the system.
To account for unknown dynamics, data--driven techniques have been explored in combination with artificial intelligence, e.g., neural network (NN) models~\cite{Sorensen1999}, physics--informed neural networks and physics--guided neural network (PGNN) models~\cite{Bolderman2021}, other hybrid model structures~\cite{Chou2023}, or Gaussian processes~\cite{Jilles2022}. 

When performing the identification, i.e., fit the model to the data, the identification cost function should be relevant for the intended use of the model~\cite{Hof1995, Schoukens2019}. 
Therefore, when identifying a model for feedforward control, the identification cost function should push model errors in a region where these errors least affect the tracking performance.
This is not achieved when performing an identification of the inverse dynamics directly, which is generally adopted in nonlinear (including PGNN) feedforward control due to the non--invertibility of nonlinear models in general, see, e.g.,~\cite{Bolderman2023, Bolderman2021, Kon2022}. 
In~\cite{Aarnoudse2021}, a control--relevant identification cost function, which filters the inverse model error with a linear model of the process sensitivity, was proposed to mitigate this issue. 
Alternatively, in~\cite{Bolderman2022c}, the authors proposed an inversion method for (PG)NNs which opens up the path to perform the identification of the forward dynamics, but did not yet achieve a quantitative relation between the tracking error and the identification error.
Such a quantitative relation is desired to have a relevant identification cost function for the feedforward control objective. 

Motivated by the above observations, in this paper, we establish a quantitative link between the tracking error and the identification error. 
The main contributions of this paper are as follows:
\begin{enumerate}
	\item A lifted formulation of the nonlinear feedforward control problem in the space of finite--length trajectories, which enables the derivation of an explicit upperbound on the norm of the tracking error;
	\item A feedforward control--oriented identification cost function, which minimizes the upperbound on the tracking error, and hence, it minimizes the tracking error itself a priori, during the design stage;
	\item A finite--horizon optimal feedforward control (FHOFC) formulation for a general class of nonlinear, possibly nonminimum phase MIMO systems which allows for specifying input, output, and state constraints.
\end{enumerate}
The developed FHOFC problem can be solved iteratively, yielding an iterative learning scheme for nonlinear systems. 
We prove that this iterative learning FHOFC recovers linear ILC~\cite{Bristow2006}, which is another contribution of this work.


\begin{figure}
\centering
\vspace{7pt}
\includegraphics[width=1.0\linewidth]{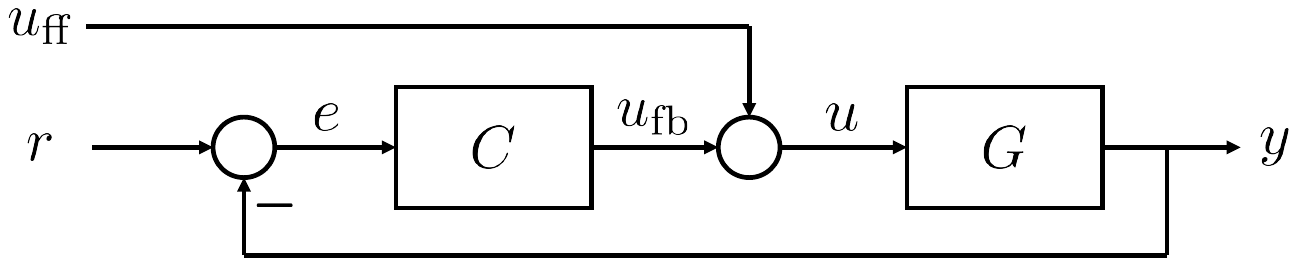}
\caption{Schematic overview of the control structure.}
\label{fig:ClosedLoopSystem}
\end{figure}

\section{PRELIMINARIES}
\label{sec:Preliminaries}
\subsection{Notation}
We denote $y(k) \in \mathbb{R}^{n_y}$ as the output at time $k \in \mathbb{N}_{>0}$, $r(k) \in \mathbb{R}^{n_y}$ is the reference, $e(k) := r(k)-y(k)$ the tracking error and $n_y \in \mathbb{Z}_{>0}$ the number of outputs.
The input $u(k) \in \mathbb{R}^{n_u}$ is the sum of the feedback and the feedforward input, such that $u(k) := u_{\textup{fb}}(k) + u_{\textup{ff}}(k)$ with $n_u \in \mathbb{Z}_{>0}$ the number of inputs. 
The state of a system is denoted as $x(k) \in \mathbb{R}^{n_x}$ with $n_x \in \mathbb{Z}_{>0}$ the state dimension.
A signal of length $N_k \in \mathbb{Z}_{>0}$ is denoted by its capital letter, e.g., $R := [r(1)^T, ..., r(N_k)^T]^T$ is the reference signal and $E := [e(1)^T, ..., e(N_k)^T]^T$ is the error signal.
The superscript $d$ is used to indicate that a signal is from the data set, e.g., $U_{\textup{ff}}^d = [u_{\textup{ff}}^d(1)^T, ..., u_{\textup{ff}}^d(N_d)^T]^T$ is the feedforward input measured during the data generating experiment of length $N_d \in \mathbb{Z}_{>0}$. 
Let a hat denote a prediction of a model, e.g., $\hat{Y} := [\hat{y}(1)^T, ..., \hat{y}(N_k)^T]^T $ is a prediction of the output $Y$ and $\hat{Y}^d := [\hat{y}^d(1)^T, ..., \hat{y}^d(N_d)^T]^T$ a prediction of $Y^d$. 
A model is parametrized by the parameters $\theta \in \mathbb{R}^{n_{\theta}}$, $n_{\theta} \in \mathbb{Z}_{>0}$, and $\hat{\theta}$ denotes the identified parameters. 

\subsection{System dynamics and model--based feedforward}
We consider the feedforward control design for a system~$G$ operating in closed--loop with feedback controller $C$ as visualized in Fig.~\ref{fig:ClosedLoopSystem}. 
The closed--loop system dynamics is (partly) unknown, which is the case in real--life systems, e.g., for a linear motor, one has to deal with parasitic effects, such as nonlinear friction and electromagnetic distortions. The closed--loop system dynamics is denoted as $\phi$, such that
\begin{align}
\begin{split}
\label{eq:SystemDynamics}
	\phi : \begin{cases} x(k+1) & = f \big( x(k), u(k) \big), \\
	y(k) & = g \big( x(k) \big), \\
	u(k) & = C(q) \big( r(k) - y(k) \big) + u_{\textup{ff}}(k). 
	\end{cases}
\end{split}
\end{align}
In~\eqref{eq:SystemDynamics}, $f : \mathbb{R}^{n_x} \times \mathbb{R}^{n_u} \rightarrow \mathbb{R}^{n_x}$ describes the unknown system dynamics, with $g : \mathbb{R}^{n_x} \rightarrow \mathbb{R}^{n_y}$ the unknown output equation. 
The feedback controller is assumed to be linear, such that $u_{\textup{fb}}(k) = C(q) \big( r(k) - y(k) \big)$ with $C(q)$ the discrete--time transfer function of the feedback controller, and $q$ the forward shift operator.

An input--output data set is generated by exciting the system via the reference $r(k)$ and the feedforward input $u_{\textup{ff}}(k)$, such that we obtain $Y^d = [y^d(1)^T, ..., y^d(N_d)^T]^T$, $R^d = [r^d(1)^T, ..., r^d(N_d)^T]^T$, and $U_{\textup{ff}}^d = [u_{\textup{ff}}^d(1)^T, ..., u_{\textup{ff}}^d(N_d)^T]^T$ that satisfies~\eqref{eq:SystemDynamics} for $k = 1, ..., N_d$.

The optimal feedforward input $u_{\textup{ff}}(k)$ yields $y(k) = r(k)$ for all $k$ when supplied to the system~\eqref{eq:SystemDynamics}.
However, since $f$ and $g$ in~\eqref{eq:SystemDynamics} are unknown, it is common practice to parameterize a model $\hat{\phi}$ of the system $\phi$ in~\eqref{eq:SystemDynamics}, such that
\begin{align}
\begin{split}
\label{eq:ModelDynamics}
	\hat{\phi} : \begin{cases} \hat{x}(k+1) & = \hat{f} \big( \theta, \hat{x}(k), \hat{u}(k) \big), \\
	\hat{y}(k) & = \hat{g} \big( \theta, \hat{x}(k) \big), \\
	\hat{u}(k) & = C(q) \big( r(k) - \hat{y}(k) \big) + u_{\textup{ff}}(k). 
	\end{cases}
\end{split}
\end{align}
where $\hat{f}$ and $\hat{g}$ are a model of $f$ and $g$ in~\eqref{eq:SystemDynamics}, respectively, and $\theta \in \mathbb{R}^{n_{\theta}}$ are the free parameters.

The state--space model in~\eqref{eq:ModelDynamics} reduces to the input--output representation used in, e.g.,~\cite{Bolderman2021, Aarnoudse2021} by choosing the state as past inputs and outputs, i.e.,
\begin{align}
\begin{split}
\label{eq:ModelDynamicsInputOutput}
	\hat{y}(k+1) & = \hat{f} \big( \theta, [\hat{y}(k)^T, ..., \hat{y}(k-n_a+1)^T,\\
	& \quad \quad  \hat{u}(k-n_k)^T, ..., \hat{u}(k-n_k-n_b)^T]^T \big), \\
	\hat{u}(k) & = C(q) \big( r(k) - \hat{y}(k) \big) + u_{\textup{ff}}(k),
\end{split}
\end{align}
where $n_a, n_b, n_k \in \mathbb{Z}_{>0}$ denote the order of the dynamics. 
Suppose that there exists an inverse relation $\hat{f}^{-1}$ of $\hat{f}$ in~\eqref{eq:ModelDynamicsInputOutput}, such that, with a slight abuse of notation, we have
\begin{align}
\begin{split}
\label{eq:ModelDynamicsFeedforward}
	\hat{u}(k) & = \hat{f}^{-1} \big( \theta, [\hat{y}(k+n_k+1)^T, ..., \hat{y}(k+n_k-n_a+1)^T, \\
	& \quad \quad \hat{u}(k-1)^T, ..., \hat{u}(k-n_b+1)^T]^T \big).
\end{split}
\end{align}
Then, the inverse model--based feedforward controller is obtained by substitution of $\hat{y}(i) = r(i)$, $i = k+n_k-n_a+1, ..., k+n_k+1$, and $\hat{u}(i) = u_{\textup{ff}}(i)$, $i = k-n_b+1, ..., k$.

\section{PROBLEM FORMULATION}
\label{sec:ProblemStatement}
Since the dynamics~$f$ in~\eqref{eq:SystemDynamics} is unknown, the feedforward control design is based on a model~$\hat{f}$ as in~\eqref{eq:ModelDynamics}. 
This typically yields a two--step feedforward controller design procedure, consisting of an \emph{identification} to fit the model~\eqref{eq:ModelDynamics} to the system~\eqref{eq:SystemDynamics} using the data $\{Y^d, R^d, U_{\textup{ff}}^d \}$, and an \emph{inversion} to find the feedforward input $U_{\textup{ff}}$ for which the output $\hat{Y}$ of the model~\eqref{eq:ModelDynamics} follows the reference $R$.

The identification step aims to find the parameters $\theta = \hat{\theta}$ for the model $\hat{\phi}$ in~\eqref{eq:ModelDynamics} that best fit the data by minimizing a cost function, such that
\begin{equation}
\label{eq:IdentificationCriterion}
	\hat{\theta} = \textup{arg} \min_{\theta} V_{\textup{id}} ( \theta, Y^d, R^d, U_{\textup{ff}}^d ) + \| \Lambda (\theta - \theta^* )\|,
\end{equation}
where $V_{\textup{id}} : \mathbb{R}^{n_{\theta}} \times \mathbb{R}^{n_y N_d} \times \mathbb{R}^{n_y N_d} \times \mathbb{R}^{n_u N_d} \rightarrow \mathbb{R}$ is the identification cost function, and $\Lambda \in \mathbb{R}^{n_{\theta} \times n_{\theta}}$ and $\theta^* \in \mathbb{R}^{n_{\theta}}$ are used for regularization. 
Suppose that $\hat{f}$ is a nonlinear input--output representation as in~\eqref{eq:ModelDynamicsInputOutput}. Then, it is not generally possibly to find an inverse $\hat{f}^{-1}$ as in~\eqref{eq:ModelDynamicsFeedforward}.
A common approach to circumvent this issue is to parametrize a model $\hat{f}^{-1}$ directly, and identify its parameters using, e.g., a one step--ahead direct inverse identification, such that
\begin{align}
\begin{split}
\label{eq:IdentificationInverseDynamics}
	V_{\textup{id}} & (\theta, Y^d, R^d, U_{\textup{ff}}^d ) = \frac{1}{N} \sum_{i = 1}^{N_d} \Big( u^d(k) - \hat{f}^{-1} \big( \theta, x^d(k) \Big)^2, \\
	x^d & (k) = [y^d(k+n_k+1), ..., y^d(k+n_k-n_a+1), \\
	& \quad \quad \quad \quad u^d(k-1), ..., u^d(k-n_b+1)]^T. 
\end{split}
\end{align}
This approach is commonly adopted in literature, see, e.g.,~\cite{Bolderman2023, Kon2022}, but fails to provide a quantitative relation between the identification and the tracking error. 
Note that, the tracking error $e(k) = r(k)-y(k)$ does not even have the same unit as $u^d(k) - \hat{f}^{-1} \big( \theta, x^d (k) \big)$. 
In order to mitigate this issue,~\cite{Aarnoudse2021} proposes to filter the inverse model error by a linear model of the process sensitivity, i.e., minimize $G(q)S(q) \Big( u^d(k) - \hat{f}^{-1} \big( \theta, x^d(k) \big) \Big)$, with $G(q)$ and $S(q)$ the transfer function of the model of the system and the sensitivity, respectively. 
This approach establishes a qualitative relation between the tracking error and the identification cost function, but a quantitative relation is still missing. 

Hence, in this work we will address two main issues: how to design the identification cost function $V_{\textup{id}}$ in~\eqref{eq:IdentificationCriterion}, and how to compute the feedforward control input $u_{\textup{ff}}(k)$ based on the identified nonlinear model, i.e., $\hat{\phi}$ in~\eqref{eq:ModelDynamics} with $\theta = \hat{\theta}$, such that the resulting tracking error is minimized for a general class of nonlinear, possibly nonminimum phase MIMO systems. 

\begin{figure}
\centering
\vspace{7pt}
\includegraphics[width=1.0\linewidth]{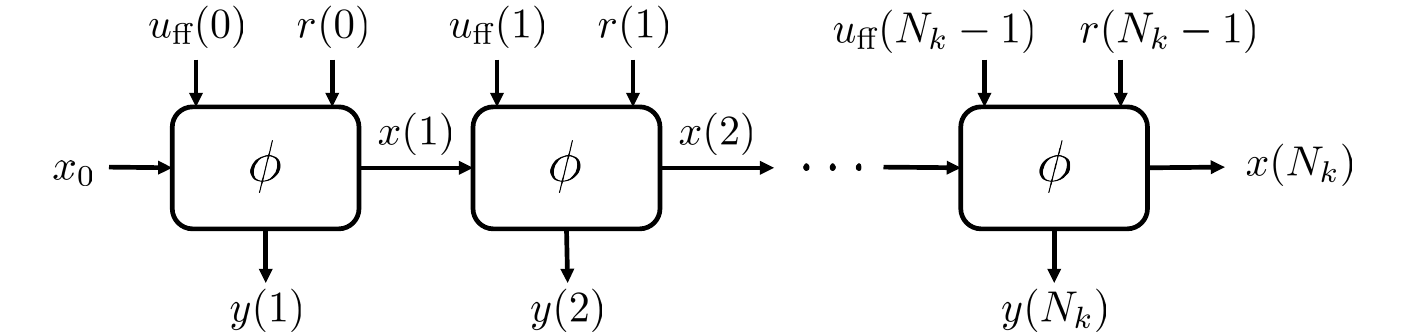}
\caption{Visual representation of the dynamics~\eqref{eq:SystemDynamics} in lifted form~\eqref{eq:SystemDynamicsLifted}.}
\label{fig:Lifted_Form}
\end{figure}

\section{FEEDFORWARD CONTROL--ORIENTED IDENTIFICATION}
\label{sec:IdentificationForControl}
We rewrite the closed--loop system~\eqref{eq:SystemDynamics} and model~\eqref{eq:ModelDynamics} into a lifted form, based on which we propose the feedforward control--oriented identification cost function as well as the FHOFC formulation.
Afterwards, we show that, indeed, minimizing the identification cost function and finding an optimal solution to the FHOFC minimizes the reference tracking error.

The lifted form of the closed--loop system is obtained by simulating~$\phi$ in~\eqref{eq:SystemDynamics} as in Fig.~\ref{fig:Lifted_Form}, such that
\begin{equation}
\label{eq:SystemDynamicsLifted}
	Y = \Phi (x_0, R, U_{\textup{ff}}).
\end{equation}
In~\eqref{eq:SystemDynamics}, $x_0 \in \mathbb{R}^{n_x}$ are the initial conditions, and $\Phi : \mathbb{R}^{n_x} \times \mathbb{R}^{n_y N_k} \times \mathbb{R}^{n_u \times N_k} \rightarrow \mathbb{R}^{n_y N_k}$ is a mapping obtained by recursive composition of the system~$\phi$.
Similarly, the lifted form of the model~$\hat{\phi}$ in~\eqref{eq:ModelDynamics} is defined as follows
\begin{equation}
\label{eq:ModelDynamicsLifted}
	\hat{Y} = \hat{\Phi} (\theta, x_0, R, U_{\textup{ff}} ),
\end{equation}
where $\hat{\Phi} : \mathbb{R}^{n_{\theta}} \times \mathbb{R}^{n_x} \times \mathbb{R}^{n_y N_k} \times \mathbb{R}^{n_u N_k} \rightarrow \mathbb{R}^{n_y N_k}$ is a model--based mapping obtained by recursive composition of~$\hat{\phi}$. 
The data set is generated by exciting the system with $U_{\textup{ff}}^d$ and $R^d$, such that, from~\eqref{eq:SystemDynamicsLifted} and~\eqref{eq:ModelDynamicsLifted}, we can write
\begin{equation}
\label{eq:DataSetLifted}
	Y^d = \Phi (x_0^d, R^d, U_{\textup{ff}}^d), \quad \hat{Y}^d = \hat{\Phi} (\theta, x_0^d, R^d, U_{\textup{ff}}^d),
\end{equation}
with $x_0^d \in \mathbb{R}^{n_{\theta}}$ the initial state of the experiment.

We aim to minimize the $p$--norm of the tracking error, i.e., $\| E \|_p = \| R - Y \|_p$ for some $p \in \mathbb{Z}_{\geq 1}$, while keeping the input, output, and states in the safe sets, i.e., $U \in \mathcal{R}_{U}$, $Y \in \mathcal{R}_{Y}$, and $X \in \mathcal{R}_{x}$.
The feedforward control signal computation is done as follows:
\begin{enumerate}
	\item \textbf{Identify} the optimal set of parameters $\hat{\theta}$ for the model~$\hat{\phi}$ in~\eqref{eq:ModelDynamics} according to~\eqref{eq:IdentificationCriterion} with the feedforward control--oriented identification cost function
\end{enumerate}
	\begin{equation}
	\label{eq:CostFunction_FeedforwardControl}
		V_{\textup{id}} (\theta, Y^d, R^d, U_{\textup{ff}}^d) = \frac{1}{N_d^{1/p}} \| Y^d - \hat{\Phi} (\theta, x_0^d, R, U_{\textup{ff}} ) \|_p.
	\end{equation}
\begin{enumerate}
	\setcounter{enumi}{1}
	\item \textbf{Compute} the feedforward input $U_{\textup{ff}}$ using the identified model, i.e.,~$\hat{\phi}$ with $\theta = \hat{\theta}$, according to the FHOFC
\end{enumerate}
\begin{align}
\begin{split}
\label{eq:FHOFCP}
	U_{\textup{ff}}  = \textup{arg}& \min_{U_{\textup{ff}}} V_{\textup{ff}} (\hat{\theta}, R, U_{\textup{ff}} )  + \| \Gamma U_{\textup{ff}} \|, \\
	\textup{subject to:}\;  U_{\textup{ff}} \in \, &\mathcal{R}_{U_{\textup{ff}}}, \; \hat{U} \in \mathcal{R}_U, \; \hat{Y} \in \mathcal{R}_Y, \; \hat{X} \in \mathcal{R}_X. 
\end{split}
\end{align}
\begin{enumerate}
	\item[] with the FHOFC cost function
\end{enumerate}
\begin{align}
\begin{split}
\label{eq:CostFunction_FHOFCP}
	V_{\textup{ff}} (\hat{\theta}, R, U_{\textup{ff}} ) = \frac{1}{N_k^{1/p}} \| R - \hat{\Phi} (\hat{\theta}, x_0, R, U_{\textup{ff}} ) \|_p. 
\end{split}
\end{align}
\begin{remark} The feedforward control--oriented identification cost function~\eqref{eq:CostFunction_FeedforwardControl} penalizes the closed--loop simulation error of the model~$\hat{\phi}$. 
This is different from the one--step--ahead inverse identification in~\eqref{eq:IdentificationInverseDynamics}, even when it is filtered with the process sensitivity. 
Moreover,~\cite{Bolderman2022c} did not consider the feedback controller in the identification, and was therefore unable to link the tracking and identification error.
\end{remark}

\begin{remark}
Solving the FHOFC optimization~\eqref{eq:FHOFCP} becomes computationally expensive when $N_k$ is large. 
However, the partial derivative of $V_{\textup{ff}}$ with respect to $U_{\textup{ff}}$ is known, such that, e.g., the constraint Gauss--Newton approach in~\cite{Volckaert2009} can be used. 
Several other options to reduce the computational complexity are: $1)$ \emph{parametrize the feedforward signal} using basis functions~\cite{Ramani2017, Kasemsinsup2017}, $2)$ \emph{Parametrize an inverse model} of the system and find its parameters via~\eqref{eq:FHOFCP}, or $3)$ \emph{solve~\eqref{eq:FHOFCP} in a receding horizon manner}. 
\end{remark}
\begin{proposition}
\label{prop:Upperbound}
Consider the system~$\phi$ in~\eqref{eq:SystemDynamics} with lifted form $\Phi$ in~\eqref{eq:SystemDynamicsLifted} and a corresponding parametrized model $\hat{\phi}$ in~\eqref{eq:ModelDynamics} with lifted form $\hat{\Phi}$ in~\eqref{eq:ModelDynamicsLifted}. 
Suppose that $\hat{\theta}$ is identified according to~\eqref{eq:IdentificationCriterion} with $V_{\textup{id}}$ in~\eqref{eq:CostFunction_FeedforwardControl}, that $U_{\textup{ff}}$ is obtained from~\eqref{eq:FHOFCP} with $V_{\textup{ff}}$ in~\eqref{eq:CostFunction_FHOFCP}, and define 
\begin{equation}
\label{eq:Varepsilon}
	\varepsilon := \frac{1}{N_k^{1/p}} \| Y - \hat{Y} \|_p - \frac{1}{N_d^{1/p}} \| Y^d - \hat{Y}^d \|_p.
\end{equation}
Then, the tracking error resulting from $U_{\textup{ff}}$ satisfies
\begin{align}
\begin{split}
\label{eq:UpperboundPerformance}
	\frac{1}{N_k^{1/p}} \| R - Y \|_p \leq V_{\textup{id}} (\hat{\theta}, Y^d, R^d, U_{\textup{ff}}^d) + V_{\textup{ff}} ( \hat{\theta}, R, U_{\textup{ff}} ) + \varepsilon. 
\end{split}
\end{align}
\end{proposition}

\begin{proof}
	From the triangular inequality and~\eqref{eq:Varepsilon}, we have
	\begin{align}
	\begin{split}
	\label{eq:Prop1Step1}
		\|R-Y \|_p & = \| R -\hat{Y} + \hat{Y} - Y \|_p \\
		& \leq \| R - \hat{Y} \|_p + \| Y - \hat{Y} \|_p \\
		\leq \| &R - \hat{Y} \|_p + \frac{N_k^{1/p}}{N_d^{1/p}} \| Y^d - \hat{Y}^d \|_p + N_k^{1/p} \varepsilon .
	\end{split}
	\end{align}
	Dividing both sides by $N_k^{1/p}$ and using $V_{\textup{id}}$ and $V_{\textup{ff}}$ as in~\eqref{eq:CostFunction_FeedforwardControl} and~\eqref{eq:CostFunction_FHOFCP} on the right hand side concludes~\eqref{eq:UpperboundPerformance}. 
\end{proof}

The parameter $\varepsilon$ in~\eqref{eq:Varepsilon} is a measure stating the relevance of the \emph{training data} $\{Y^d, R^d, U_{\textup{ff}}^d\}$ with respect to the \emph{operation data} $\{Y, R, U_{\textup{ff}} \}$, which can be interpreted as the \emph{validation data}.
Accordingly, $\varepsilon > 0$ indicates that the training data does not sufficiently represent the operation data, while $\varepsilon < 0$ indicates that the training data covers all system dynamics that contribute to the tracking error. 
We aim for $\varepsilon = 0$, which is achieved when either:
\begin{enumerate}
	\item \emph{Training data} is the \emph{operation data}, i.e., design $\{ R^d, U_{\textup{ff}}^d \} = \{R, U_{\textup{ff}} \}$. An existing feedforward controller could be used for $U_{\textup{ff}}^d$ as an approximation of~$U_{\textup{ff}}$;
	\item \emph{Consistent parameter identification}, such~that $\Phi(x_0, R, U_{\textup{ff}}) =\hat{\Phi}(\hat{\theta}, x_0, R, U_{\textup{ff}})$ and $\Phi (x_0^d, R^d, U_{\textup{ff}}^d) = \hat{\Phi} (\hat{\theta}, x_0^d, R^d, U_{\textup{ff}}^d)$. This requires standard assumptions from system identification, i.e., the system is in the model set, the data is persistently exciting, and the optimization of~\eqref{eq:IdentificationCriterion} yields a global optimum, see, e.g.,~\cite{Bolderman2022c} for the formalized assumptions. 
\end{enumerate}
Suppose that the operation data has the same length as the data set, i.e., $N_k = N_d = N$ and that $x_0^d = x_0$. Then, with the triangular inequality, we upperbound $\varepsilon$ in~\eqref{eq:Varepsilon} by
\begin{align}
\begin{split}
\label{eq:Varepsilon_Upperbound}
	\varepsilon & \leq \frac{1}{N^{1/p}} \| Y - \hat{Y} - Y^d + \hat{Y}^d \|_p \\
	& \leq \frac{1}{N^{1/p}} \left\| \begin{bmatrix} \gamma_R , \gamma_{U_{\textup{ff}}} \end{bmatrix}  \begin{bmatrix} R - R^d \\ U_{\textup{ff}} - U_{\textup{ff}}^d \end{bmatrix} \right\|_p.
\end{split}
\end{align}
where $\gamma_R = \max \big| \frac{\partial (Y - \hat{Y})}{\partial R} \big|$ and $\gamma_{U_{\textup{ff}}} = \max \big| \frac{\partial (Y - \hat{Y})}{\partial U_{\textup{ff}}} \big|$. 
Since $Y$ and $\hat{Y}$ in~\eqref{eq:Varepsilon_Upperbound} result from a closed--loop simulation, the feedback controller $C(q)$ also affects the upperbound in~\eqref{eq:Varepsilon_Upperbound}. 
Note that, for a linear system, $\frac{\partial Y}{\partial R}$ and $\frac{\partial Y}{\partial U_{\textup{ff}}}$ describe the complementary sensitivity and process sensitivity, respectively. 

\begin{remark} 
Once the feedforward input $U_{\textup{ff}}$ is obtained from~\eqref{eq:FHOFCP}, it is possible to use it for generating new data which results in a smaller $\varepsilon$.
\end{remark}

\section{FINITE--HORIZON OPTIMAL FEEDFORWARD CONTROL}
\label{sec:FHOFC}
Next we show that the FHOFC~\eqref{eq:FHOFCP} can recover some of the state--of--the--art methods for feedforward control, namely: inverse model--based feedforward and linear ILC. 
For simplicity, we neglect the constraints in~\eqref{eq:FHOFCP}. 
\begin{lemma}
\label{le:InversionBasedFeedforward}
	Consider the feedforward control design using an identified input--output model~\eqref{eq:ModelDynamicsInputOutput} for which the inverse relation~\eqref{eq:ModelDynamicsFeedforward} is unique. 
	Suppose that the initial conditions are such that $y(i) = r(i)$ for $i \in \{ n_k-n_a+1, ..., n_k \}$. 
	Then, the inverse model--based feedforward controller 
	\begin{align}
	\begin{split}
	\label{eq:InversionBasedFeedforward}
		u_{\textup{ff}}(k) = f^{-1} \big( \hat{\theta}, [& r^T(k+n_k+1), ..., r^T(k+n_k-n_a+1), \\
		& u_{\textup{ff}}^T(k-1), ..., u_{\textup{ff}}^T(k-n_b+1)]^T \big)
	\end{split}
	\end{align}
	solves the unconstrained FHOFC~\eqref{eq:FHOFCP} with $\Gamma = 0$ in a receding horizon manner with a preview of $n_k+1$.  
\end{lemma}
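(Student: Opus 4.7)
The plan is to show that, under the stated hypotheses, the inverse--model feedforward~\eqref{eq:InversionBasedFeedforward} drives the cost in~\eqref{eq:FHOFCPRecedingHorizon} exactly to zero. Since the norm is non--negative, $\Gamma=0$, and the receding--horizon problem is unconstrained, zero is the global minimum, so any $u_{\textup{ff}}(k)$ satisfying~\eqref{eq:InversionBasedFeedforward} is a solution.

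First I would unpack how $\hat{\bar{y}}(k)$ depends on the decision vector $\bar{u}_{\textup{ff}}(k)=[u_{\textup{ff}}^T(k),\dots,u_{\textup{ff}}^T(k+h)]^T$ through the input--output model~\eqref{eq:SystemModelInputOutput}. For each $j\in\{0,\dots,n_k\}$, the prediction $\hat{y}(k+j)$ involves only inputs with indices in $\{k+j-n_k-1,\dots,k+j-n_k-n_b\}$, all strictly below $k$; hence these entries of $\hat{\bar{y}}(k)$ are fixed by past data and do not depend on $\bar{u}_{\textup{ff}}(k)$. The unique horizon entry that is affected by a decision variable is $\hat{y}(k+n_k+1)=\hat{y}(k+h)$, which depends on $u_{\textup{ff}}(k)$ via $\hat{u}(k)$. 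The remaining decision variables $u_{\textup{ff}}(k+1),\dots,u_{\textup{ff}}(k+h)$ influence only predictions at times $k+n_k+2,\dots,k+2n_k+2$, strictly outside the horizon, and therefore drop out of the cost once $\Gamma=0$.

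Second, I would propagate a perfect--tracking invariant by induction on $k$, initialised by the hypothesis $y(i)=r(i)$ for $i\in\{n_k-n_a+1,\dots,n_k\}$. On the corresponding past window I take $\hat{y}(i)=y(i)=r(i)$, which forces the feedback term in~\eqref{eq:OpenClosedLoop} to vanish and gives $\hat{u}(i)=u_{\textup{ff}}(i)$. With these matched values substituted into the regressor of~\eqref{eq:SystemModelInputOutput}, choosing $u_{\textup{ff}}(k)$ as in~\eqref{eq:InversionBasedFeedforward} and using the uniqueness of the inverse relation~\eqref{eq:InverseModelPGNN} yields $\hat{y}(k+n_k+1)=r(k+n_k+1)$, which both closes the induction step and, combined with the first observation, makes every entry of $\hat{\bar{y}}(k)$ equal to the corresponding entry of $\bar{r}(k)$. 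The cost $\|\bar{r}(k)-\hat{\bar{y}}(k)\|+\|\Gamma\bar{u}_{\textup{ff}}(k)\|$ therefore evaluates to zero, proving optimality.

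The main obstacle I anticipate is the bookkeeping of this induction: one must verify that the hypothesis on $y(i)$, $i\in\{n_k-n_a+1,\dots,n_k\}$, together with the associated past inputs, supplies exactly the data needed to evaluate~\eqref{eq:InverseModelPGNN} at $k=0$ and to seed the invariant, and that the reduction $\hat{u}(i)=u_{\textup{ff}}(i)$ truly holds at every induction step rather than only at $k=0$. A minor point is that the minimizer is non--unique in the blocks $u_{\textup{ff}}(k+1),\dots,u_{\textup{ff}}(k+h)$, but this is harmless because~\eqref{eq:FHOFCPRecedingHorizon} applies only the first block.
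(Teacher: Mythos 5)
Your proposal is correct and follows essentially the same route as the paper's proof: both observe that within the horizon only $\hat{y}(k+n_k+1)$ depends on $u_{\textup{ff}}(k)$ (so the later decision blocks drop out when $\Gamma=0$), both achieve the zero-cost minimum by enforcing $\hat{y}(k+n_k+1)=r(k+n_k+1)$ via the unique inverse, and both use the vanishing feedback term to conclude $\hat{u}(k)=u_{\textup{ff}}(k)$. Your explicit induction on the perfect-tracking invariant and the remark on non-uniqueness of the trailing blocks merely spell out what the paper compresses into "solved sequentially for $k=0,1,\dots,N_k$."
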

\begin{proof}
	$u_{\textup{ff}}(k)$ appears first in the predicted output $\hat{y}(k+n_k+1)$, such that $u_{\textup{ff}}(k+i)$, $i \in \mathbb{Z}_{>0}$ does not play a role in the cost function when $\Gamma = 0$. 
	Hence, the FHOFC~\eqref{eq:FHOFCP} with preview $n_k+1$ becomes
	\begin{equation}
	\label{eq:Proof3Step1}
		u_{\textup{ff}}(k) = \textup{arg} \min_{u_{\textup{ff}}(k)} \| r(k+n_k+1) - \hat{y} (k + n_k+1) \|. 
	\end{equation}
	Since the minimum is attained for $r(k+n_k+1) = \hat{y}(k+n_k+1)$, the solution of~\eqref{eq:Proof3Step1} equals~\eqref{eq:InversionBasedFeedforward} when~\eqref{eq:Proof3Step1} is solved sequentially, i.e., for $k = 0, 1, ..., N_k$. 
	Observing that, from~\eqref{eq:ModelDynamicsInputOutput}, $\hat{u}(k) = C(q) \big( r(k)-y(k) \big) + u_{\textup{ff}}(k) = u_{\textup{ff}}(k)$ for $r(i) - \hat{y}(i) = 0$, $i \in \{ 0, ..., k \}$, completes the proof. 
\end{proof}

\begin{remark}
	In~\cite{Bolderman2022c}, the optimization~\eqref{eq:Proof3Step1}, which is a specific case of the FHOFC~\eqref{eq:FHOFCP}, was solved by proposing a specific PGNN structure for which $f^{-1}$ of $f$ in~\eqref{eq:ModelDynamicsInputOutput} is known analytically. Moreover, it discusses the use of a numerical solver when $f^{-1}$ is not known. 
\end{remark}

When the identification in~\eqref{eq:IdentificationCriterion} and the FHOFC~\eqref{eq:FHOFCP} are solved accurately, the tracking error may remain large due to $\varepsilon$ in~\eqref{eq:UpperboundPerformance}. 
To improve performance, it is possible to re--identify $\theta$ based on new data that is generated with $U_{\textup{ff}}$. 
Alternatively, when the system has to perform a repetitive tasks, i.e., has to track a reference $R$ multiple iterations, an iterative learning scheme can be implemented when the same reference is executed repetitively, as is done in ILC~\cite{Bristow2006}. 
Inspired by ILC, we present the iterative learning FHOFC (IL--FHOFC) scheme:
\begin{align}
\begin{split}
\label{eq:NonlinearILCEquivalent}
	\zeta & = \hat{\Phi} (\hat{\theta}, x_0, R, U_{\textup{ff}}^{(i)}) + \alpha E^{(i)}, \\
	U_{\textup{ff}}^{(i+1)} & = Q_m \left[ \textup{arg} \min_{U_{\textup{ff}}} \big\| \zeta - \hat{\Phi} (\hat{\theta}, x_0, R, U_{\textup{ff}}) \big\| + \| \Gamma U_{\textup{ff}} \| \right].
\end{split}
\end{align}
In~\eqref{eq:NonlinearILCEquivalent}, the superscript $(i)$ relates to iteration $i \in \mathbb{Z}_{>0}$ of a signal, $E^{(i)} = R - Y^{(i)}$ is the tracking error since $R$ is constant over the iterations, $\zeta \in \mathbb{R}^{n_u N_k}$ is an auxiliary signal, and $Q_m \in \mathbb{R}^{n_u N_k \times n_u N_k}$ contains the Markov parameters of a robustness filter with transfer function matrix $Q$.

\begin{lemma}
\label{le:FHOFCILCEquivalence}
	Suppose that $G$ is the transfer function matrix of a linear system, $C$ is a feedback controller, and $\hat{G}$ is a model of $G$, with all transfer functions represented in the digital domain. 
	Let $\hat{\Phi}$ be the closed--loop model--based mapping obtained from $C$ and $\hat{G}$, and let $\alpha \in (0, 1]$ be a learning gain and $Q$ a robustness filter. 
	Then, the ILC law
	\begin{align}
	\begin{split}
	\label{eq:LinearILCInput}
	\xi^{(i)}(k) & = u_{\textup{ff}}^{(i)}(k) + \alpha \hat{G}^{-1} ( 1 + \hat{G}C) e^{(i)}(k), \\
	u_{\textup{ff}}^{(i+1)}(k) & = Q \xi^{(i)}(k),
	\end{split}
	\end{align}
	solves the IL--FHOFC problem~\eqref{eq:NonlinearILCEquivalent} with $\Gamma = 0$. 
\end{lemma}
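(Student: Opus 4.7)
The plan is to exploit linearity to rewrite the closed--loop model--based mapping $\hat{\Phi}$ in input--output form and reduce the IL--FHOFC minimization to a linear equation. In the linear case, the model~\eqref{eq:SystemModelInputOutput} together with the feedback law~\eqref{eq:OpenClosedLoop} yields
\begin{equation*}
\hat{y} \;=\; \hat{G}\,\hat{u} \;=\; \hat{G}C(r-\hat{y}) + \hat{G}\,u_{\textup{ff}},
\end{equation*}
which can be rearranged to the standard closed--loop expression
\begin{equation*}
\hat{\Phi}(\hat{\theta}, x_0, R, U_{\textup{ff}}) \;=\; \hat{T}\,R + \hat{S}\hat{G}\,U_{\textup{ff}},
\end{equation*}
with complementary sensitivity $\hat{T} = \hat{G}C(1+\hat{G}C)^{-1}$ and sensitivity $\hat{S} = (1+\hat{G}C)^{-1}$. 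I would begin by establishing this identity; in lifted (matrix) form over the horizon $N_k$ it is simply a restatement of the block Toeplitz representation of the closed loop.

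Next I would plug this into the IL--FHOFC problem~\eqref{eq:NonlinearILCEquivalent} with $\Gamma = 0$ and no constraints. The cost becomes
\begin{equation*}
\bigl\| R^{(i+1)} - \hat{T}R - \hat{S}\hat{G}\,U_{\textup{ff}}^{(i+1)} \bigr\|,
\end{equation*}
which is a norm of an affine function of $U_{\textup{ff}}^{(i+1)}$. Because $\hat{S}\hat{G}$ is invertible (as an operator; in the lifted view its Toeplitz matrix is square with nonzero diagonal whenever $\hat{G}$ has a well--defined inverse in the assumed digital setting), the minimum is zero and is attained uniquely at
\begin{equation*}
U_{\textup{ff}}^{(i+1)} \;=\; \hat{G}^{-1}(1+\hat{G}C)\bigl(R^{(i+1)} - \hat{T}R\bigr).
\end{equation*}

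The final step is algebraic: substitute $R^{(i+1)} = \hat{Y}^{(i)} + \alpha E^{(i)}$ and use the relation $\hat{Y}^{(i)} = \hat{T}R + \hat{S}\hat{G}\,U_{\textup{ff}}^{(i)}$, which gives $R^{(i+1)} - \hat{T}R = \hat{S}\hat{G}\,U_{\textup{ff}}^{(i)} + \alpha E^{(i)}$. Applying $\hat{G}^{-1}(1+\hat{G}C) = (\hat{S}\hat{G})^{-1}$ to both sides yields
\begin{equation*}
U_{\textup{ff}}^{(i+1)} \;=\; U_{\textup{ff}}^{(i)} + \alpha\,\hat{G}^{-1}(1+\hat{G}C)\,E^{(i)},
\end{equation*}
which in pointwise form is exactly~\eqref{eq:LinearILCInput}.

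The main obstacle is the invertibility argument: one needs $\hat{S}\hat{G}$ to admit a (causal or at least well--defined) inverse so that the minimum of the norm is attained at zero. This is standard in the linear ILC literature, but it has to be handled carefully since $\hat{G}^{-1}$ may be noncausal or unstable when $\hat{G}$ is nonminimum phase; in the lifted, finite--horizon setting used here this corresponds to invertibility of the associated block Toeplitz matrix, which is generically true and is tacitly assumed by the form of the ILC law~\eqref{eq:LinearILCInput}. Once this is granted, the rest of the proof is a short linear--algebraic manipulation.
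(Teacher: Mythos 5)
Your proof is correct and rests on the same closed--loop decomposition $\hat{Y} = \hat{T}R + \hat{S}\hat{G}\,U_{\textup{ff}}$ that the paper uses, but you run the argument in the opposite direction: you solve the unconstrained minimization explicitly for $U_{\textup{ff}}^{(i+1)}$ and then simplify to the ILC law, whereas the paper simply substitutes the candidate update law~\eqref{eq:LinearILCInput} into the predicted output, observes that $\hat{y}^{(i+1)} = \hat{y}^{(i)} + \alpha e^{(i)}$ makes the cost exactly zero, and concludes global optimality from non--negativity of the norm. The practical difference is precisely the obstacle you flag at the end: your route needs $\hat{S}\hat{G}$ to be invertible as an operator over the horizon in order to assert that the zero--cost solution exists and is unique, while the paper's verification route needs no such argument for the optimality claim (it only needs $\hat{G}^{-1}$ to exist so that~\eqref{eq:LinearILCInput} is well defined, and it does not claim uniqueness). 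So the invertibility discussion you identify as the main difficulty can be dispensed with entirely by checking the candidate directly; on the other hand, your version buys a uniqueness statement that the paper does not provide.
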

\begin{proof}
	The output predicted by $\hat{G}$ at iteration $i$ is 
	\begin{equation}
	\label{eq:LinearClosedLoop}
		\hat{y}^{(i)}(k) = (1+\hat{G}C)^{-1}\hat{G}Cr(k) + (1+\hat{G}C)^{-1}\hat{G}u_{\textup{ff}}^{(i)}(k).
	\end{equation}
	The output of the system with input $\xi^{(i+1)}(k)$ in~\eqref{eq:LinearILCInput} is
	\begin{align}
	\begin{split}
	\label{eq:Proof4Step1}
		\hat{y}^{(i+1)}(k) = \hat{y}^{(i)}(k) + \alpha e^{(i)}(k).
	\end{split}
	\end{align}
	Placing the time entries of~\eqref{eq:Proof4Step1} in a column, shows that
	\begin{align}
	\begin{split}
	\label{eq:Proof4Step2}
		& \big\| \zeta^{(i+1)} - \hat{\Phi} (\hat{\theta}, x_0, R, [\xi^{(i+1)}(1)^T, ..., \xi^{(i+1)}(N_k)^T]^T ) \big\| = 0. 
	\end{split}
	\end{align}
	Hence, the first line of the linear ILC~\eqref{eq:LinearILCInput} solves the optimization in the IL--FHOFC~\eqref{eq:NonlinearILCEquivalent} for $\Gamma = 0$. 
	Since $Q_m$ describes the robustness filter $Q$ in ILC~\eqref{eq:LinearILCInput}, $U_{\textup{ff}}^{(i+1)}$ in~\eqref{eq:NonlinearILCEquivalent} is the column of $u_{\textup{ff}}^{(i+1)}(k)$ in~\eqref{eq:LinearILCInput}. 
\end{proof}
\begin{remark}
	Both $Q_m$ and $\Gamma$ in the IL--FHOFC~\eqref{eq:NonlinearILCEquivalent} are parameters that affect convergence.
	Proving convergence for specific systems and models will be done in future work. 
\end{remark}


\begin{figure}
\centering
\vspace{7pt}
\includegraphics[width=1.0\linewidth]{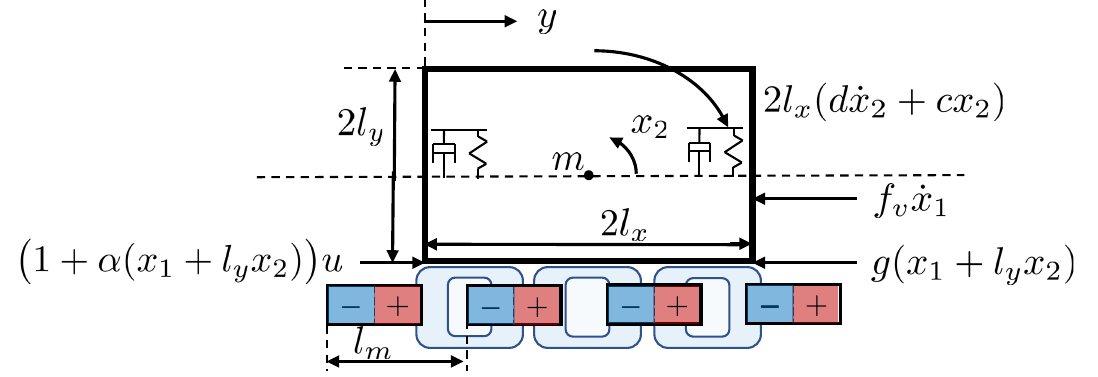}
\caption{Rotating--translating mass with actuation and sensing on opposite sides of the centre of mass.}
\label{fig:NonminimumPhaseExample}
\end{figure}
\begin{table}
\caption{Parameter values of the system displayed in Fig.~\ref{fig:NonminimumPhaseExample}.}
\label{tab:ParameterValues}
\centering
\begin{tabular}{|c|c|c|c|c|c|c|c|c|} \hline
$m$ & $l_x, l_y$ & $J$ & $f_v$ & $c$ & $d$ & $l_m$ & $c_{\alpha}$ & $c_g$ \\ \hline \hline
$20$ & $1$  & $\frac{40}{3}$  & $50$  & $\frac{25 \cdot 10^{3}}{3}$  & $\frac{575}{3}$ & $0.05$ & $0.05$ & $1$ \\ \hline
$kg$ & $m$ & $kgm^2$ & $\frac{kg}{s}$ & $\frac{kg}{s^2}$ & $\frac{kg}{s}$ & $m$ & $-$ & $N$ \\ \hline
\end{tabular}
\end{table}

\section{VALIDATION ON A NONMINIMUM PHASE NONLINEAR MECHATRONIC EXAMPLE}
\label{sec:Results}
\emph{\textbf{System description:}} We consider the position control of a translating--rotating mass with force input $u$ and position output $y$ at opposite sides of the centre of mass, see Fig.~\ref{fig:NonminimumPhaseExample}.
Let $x_1$ be the position of the centre of mass, and $x_2$ the rotation, such that $u$, $x_1$, $x_2$ and $y$ are functions of time. Then, the continuous time dynamics are
\begin{align}
\begin{split}
\label{eq:SimulationModel}
	\ddot{x}_1 & = \frac{1}{m} \Big( -f_v\dot{x}_1 + \big( 1 + \alpha (x_1 + l_y x_2) \big) u \\
	& \quad \quad \quad - g(x_1 + l_y x_2) \Big), \\
	\ddot{x}_2 & = \frac{1}{J} \Big( - 2l_x (d\dot{x}_2	+cx_2) + \\
	& \quad l_y  \big( 1 + \alpha(x_1 + l_y x_2) \big) u - l_y g (x_1 + l_y x_2 ) \Big), \\
	y & = x_1 - l_y x_2. 
\end{split}
\end{align}
In~\eqref{eq:SimulationModel}, $l_x, l_y \in \mathbb{R}_{\geq 0}$ are the width and height of the mass $m \in \mathbb{R}_{>0}$, $J = \frac{1}{3} m(l_x^2 + l_y^2)$ is the moment of inertia, $f_v \in \mathbb{R}_{>0}$ the viscous friction coefficient, and $d, c \in \mathbb{R}_{>0}$ the damping and spring constant counteracting rotation from both ends of the mass. 
The nonlinearities $\alpha (x_1+l_yx_2)$ and $g(x_1 + l_y x_2)$ represent the force ripple and cogging force, and are \emph{unknown}. For simulation purposes, they are modelled as
\begin{align}
\begin{split}
\label{eq:RippleAndCogging}
	\alpha (x_1 + l_y x_2) & = c_{\alpha} \sin \big( \frac{2 \pi}{l_m} (x_1 + l_y x_2) + \frac{\pi}{4} \big), \\
	g (x_1 + l_y x_2) & = c_{g} \sin \big( \frac{2 \pi}{l_m} (x_1 + l_y x_2) \big) ,
\end{split}
\end{align}
with $l_m \in \mathbb{R}_{>0}$ the magnet magnet pole pitch and $c_{\alpha}, c_{g} \in \mathbb{R}$ the riple and cogging magnitude. Parameter values are listed in Table~\ref{tab:ParameterValues}.
The system~\eqref{eq:SimulationModel} is controlled in closed--loop at a frequency $F_s = 100$ $Hz$ by the ZOH discretization of
\begin{equation}
\label{eq:SimulationModelFeedback}
	C(s) = 5\cdot 10^{3} \frac{s+4\pi}{s+20 \pi}.
\end{equation}
A normally distributed noise $v(k) \sim \mathcal{N} \big( 0, (10^{-6})^2 \big)$ $m$ is added as \emph{measurement noise}. 
The system~\eqref{eq:SimulationModel} exhibits the following challenges for feedforward control:
\begin{enumerate}
	\item \emph{Nonlinear dynamics} on the input (force ripple) and output (cogging force), depending on an internal state;
	\item \emph{Nonminimum phase dynamics} which requires non--causal actuation to return a stable feedforward signal.
\end{enumerate}

\emph{\textbf{Data generation:}} The training data is generated in closed--loop by sampling the output $y^d(k)$ at the frequency $F_s$ for a duration of $45$ $s$. 
The reference $R^d$ is a concatenation of $15$ times the third order reference $R$ in Fig.~\ref{fig:Reference}, which has bounded velocity $| \frac{d}{dt} r^d | \leq 0.1$ $\frac{m}{s}$, acceleration $| \frac{d^2}{dt^2} r^d | \leq 4$ $\frac{m}{s^2}$ and jerk $|\frac{d^3}{dt^3} r^d | \leq 40$ $\frac{m}{s^3}$. 
Additionally, $U_{\textup{ff}}^d$ is a zero--mean white noise with variance $\sigma^2 = 10^2$ $N^2$ for $t = [10, 40)$ $s$.

\begin{figure}
\centering
\vspace{7pt}
\includegraphics[width=0.9\linewidth]{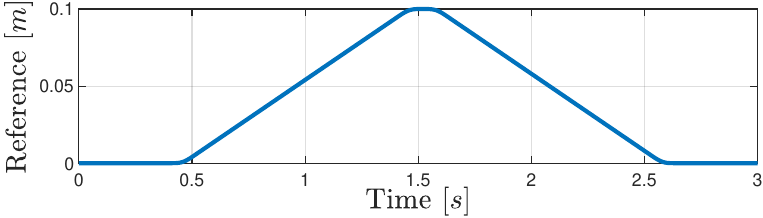}
\caption{Reference $R$ used for performance evaluation.}
\label{fig:Reference}
\end{figure}
\begin{figure}
\centering
\includegraphics[width=1.0\linewidth]{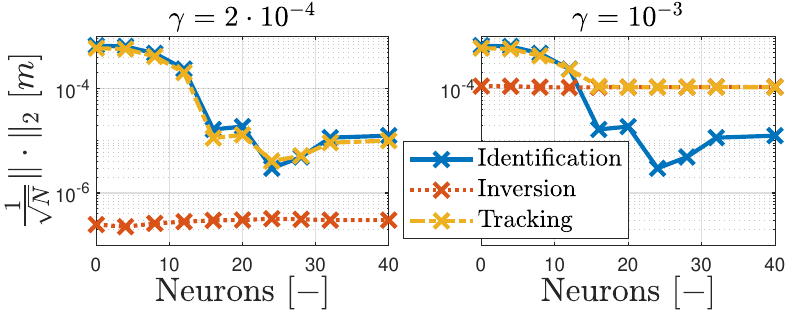}
\caption{Normalized $2$--norm of the identification ($Y^d - \hat{Y}^d$, $N = N_d$), inversion ($R - \hat{Y}$, $N=N_k$) and tracking ($R-Y$, $N = N_k$) error for PGNNs with different number of neurons using $\gamma = 2 \cdot 10^{-4}$ (left window) and $\gamma = 10^{-3}$ (right window).}
\label{fig:TrackingErrorsNeurons}
\end{figure}

\emph{\textbf{Model parametrization:}} We parameterize the system~\eqref{eq:SimulationModel} with a state--space PGNN as
\begin{align}
\begin{split}
\label{eq:StateSpacePGNN}
	\dot{\hat{x}} &= A(\theta_{\textup{phy}}) \hat{x} + B(\theta_{\textup{phy}})\big( \hat{u} + f_{\textup{NN}} (\theta_{\textup{NN}}, \hat{x}, \hat{u}) \big), \\
	\hat{y} & = C(\theta_\textup{phy}) \hat{x}, 
\end{split}
\end{align}
where $\hat{u}$, $\hat{x}$ and $\hat{y}$ are functions of time,~$f_{\textup{NN}} :\mathbb{R}^{n_{\theta_{\textup{nn}}}} \times \mathbb{R}^{4} \times \mathbb{R} \rightarrow \mathbb{R}^4$ is a NN, and $\theta = [\theta_{\textup{phy}}^T, \theta_{\textup{NN}}^T]^T$ with $\theta_{\textup{phy}}$ representing the physical parameters, and $\theta_{\textup{NN}}$ the NN weights and biases. 
The NN has a single hidden layer with $n_1 \in \mathbb{Z}_{\geq 0}$ neurons, which we can vary. 
We discretize~\eqref{eq:StateSpacePGNN} using ZOH while assuming that $f_{\textup{NN}}$ remains constant in between two samples. 


\emph{\textbf{System identification:}} The PGNN parameters are identified according to~\eqref{eq:IdentificationCriterion} with feedforward control--oriented identification cost function~\eqref{eq:CostFunction_FeedforwardControl} (using the \emph{lsqnonlin} MATLAB function) with $\Lambda = 10^{-5} [\textup{diag}(\theta_{\textup{phy}}^T)^{-1} , 0]$, $\theta^* = [{\theta_{\textup{phy}}^*}^T, 0]^T$, and $\theta_{\textup{phy}}^*$ the parameters obtained for the linear part of~\eqref{eq:StateSpacePGNN}. 

\emph{\textbf{Feedforward:}} The feedforward signal $U_{\textup{ff}}$ has length $N_k = 300$ and is computed by solving the FHOFC~\eqref{eq:FHOFCP} while penalizing the rate of change in $U_{\textup{ff}}$ by choosing
	$\Gamma = \gamma \Delta,$
with $\gamma \in \mathbb{R}_{\geq 0}$ the importance of the regularization and $\Delta \in \mathbb{R}^{N_k \times N_k}$ has $1$ on the diagonal, $-1$ on the subdiagonal and $0$ elsewhere.
Solving the FHOFC~\eqref{eq:FHOFCP} converges in $7$ $s$ with \emph{lsqnonlin} on a $2.59$ $GHz$ Intel Core--I7--9750H using MATLAB 2019A.

\emph{\textbf{Results:}} Fig.~\ref{fig:TrackingErrorsNeurons} visualizes the $2$--norm of the identification, inversion and tracking error in~\eqref{eq:UpperboundPerformance} for the reference $R$ in Fig.~\ref{fig:Reference} using different number of neurons $n_1$, with $\gamma = 2\cdot 10^{-4}$ (left window) and $\gamma = 1 \cdot 10^{-3}$ (right window). 
Increasing the number of neurons $n_1$ in the model~\eqref{eq:StateSpacePGNN} improves the accuracy of the identification.
For $\gamma = 2\cdot 10^{-4}$ the inversion error is small, such that the tracking error is limited by the accuracy of the identification.
In contrast, for $\gamma = 10^{-3}$ the inversion error increases, which limits the achievable performance for $n_i > 12$. 
Comparing the upperbound~\eqref{eq:UpperboundPerformance} with the results in Fig.~\ref{fig:TrackingErrorsNeurons} indicates that $\varepsilon$ is small.
Correspondingly, hyperparameters can be tuned based on~\eqref{eq:UpperboundPerformance}, e.g., Fig.~\ref{fig:TrackingErrorsNeurons} shows that at least $n_l = 16$ neurons are required to achieve $\frac{1}{\sqrt{N_k}} \| E \|_2 < 10^{-4}$~$m$. 

Fig.~\ref{fig:ILCImprovements} shows the $2$--norm of the tracking error for a linear and a PGNN model over multiple iterations of the reference $R$ in Fig.~\ref{fig:Reference} when using the IL--FHOFC~\eqref{eq:NonlinearILCEquivalent}.
Since both approaches reach the noise--floor, results are added where $v(k) = 0$ to emphasize the benefit of the nonlinear PGNN model structure. 
The IL--FHOFC~\eqref{eq:NonlinearILCEquivalent} with nonlinear PGNN model yields a significant reduction in the number of iterations required to reach a target performance.  

\begin{figure}
\centering
\vspace{7pt}
\includegraphics[width=1.0\linewidth]{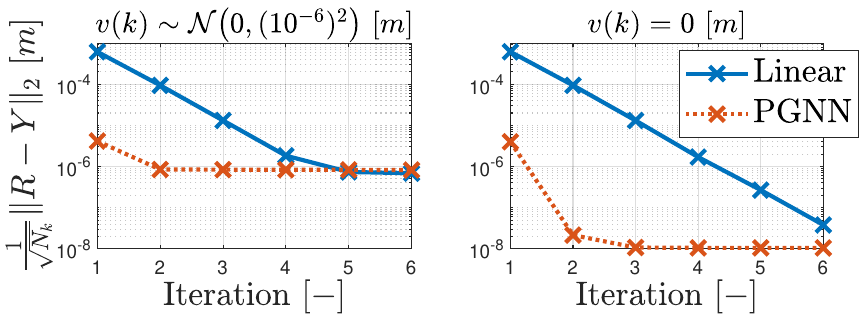}
\caption{Normalized $2$--norm of the tracking error over the iterations using IL--FHOFC~\eqref{eq:NonlinearILCEquivalent} with $\alpha = 1$, $\gamma = 10^{-5}$ and $Q_m = I$ using a linear and a PGNN~\eqref{eq:StateSpacePGNN} model with $n_1 = 24$, simulated with $v(k) \sim \mathcal{N} \big( 0, (10^{-6})^2 \big)$ (left window) and with $v(k) = 0$ (right window).}
\label{fig:ILCImprovements}
\end{figure}

\section{CONCLUSIONS}
\label{sec:Conclusions}
In this paper, we presented a generalized approach to nonlinear data--driven feedforward control design from the perspective of minimizing tracking errors.
We showed that the norm of the reference tracking error is upperbounded by the sum of the inversion and the identification error, respectively. 
This resulted in a two--step approach to feedforward control design, consisting of a feedforward control--oriented system identification followed by a finite--horizon optimization to compute the feedforward input signal.
Generality of the FHOFC formulation was demonstrated by recovering inverse model--based feedforward and linear ILC for specific settings.






\bibliographystyle{IEEEtran}
\bibliography{IEEEabrv,References}

\end{document}